\newcolumntype{L}[1]{>{\raggedright\arraybackslash}p{#1}}
\newcolumntype{C}[1]{>{\centering\arraybackslash}p{#1}}
\newcolumntype{R}[1]{>{\raggedleft\arraybackslash}p{#1}}
\newcommand{\xmark}{\ding{55}}%
\newtheorem{definition}{Definition}
\newtheorem{theorem}{Observation}
\def\@captype{table}\makeatother
  \providecommand\BibTeX{{%
    \normalfont B\kern-0.5em{\scshape i\kern-0.25em b}\kern-0.8em\TeX}}}
\renewcommand\footnotetextcopyrightpermission[1]{}
\begin{document}
\fancyhead{}
\title{Fail-safe Watchtowers and Short-lived Assertions for Payment Channels}

\author{Bowen Liu}
\affiliation{%
  \institution{Singapore University of\\Technology and Design}
  \country{Singapore}
}
\email{bowen\_liu@mymail.sutd.edu.sg}

\author{Pawel Szalachowski}
\affiliation{%
  \institution{Singapore University of\\Technology and Design}
  \country{Singapore}
}
\email{pawel@sutd.edu.sg}

\author{Siwei Sun}
\affiliation{%
  \institution{State Key Laboratory of Information Security, Institute of Information Engineering, Chinese Academy of Sciences}
}
\affiliation{%
  \institution{School of Cyber Security, University of Chinese Academy of Sciences}
  \city{Beijing}
  \country{China}
}
\email{sunsiwei@iie.ac.cn}

\begin{abstract}
    The recent development of payment channels and their extensions (e.g., state channels)
    provides a promising scalability solution for
    blockchains which allows untrusting parties to transact off-chain and resolve
    potential disputes via on-chain smart contracts. 
    To protect participants who have no constant access to the blockchain,
    a watching service named as watchtower is proposed -- 
    a third-party entity obligated to
    monitor channel states (on behalf of the participants) and correct them
    on-chain if necessary. Unfortunately, currently proposed watchtower schemes
    suffer from multiple security and efficiency drawbacks.
    
    In this paper, we explore the design space behind watchtowers. 
    We propose a novel watching service named as \textit{fail-safe watchtowers}. 
    In contrast to prior proposed watching services, our fail-safe watchtower
    does not watch on-chain smart contracts constantly.
    Instead, it only sends a single on-chain message periodically confirming or
    denying the final states of channels being closed.
    Our watchtowers can easily handle a large number of channels, are
    privacy-preserving, and fail-safe tolerating multiple attack vectors. 
    Furthermore, we show that watchtowers (in general) may be an option economically
    unjustified for multiple payment scenarios and we introduce a simple, yet
    powerful concept of short-lived assertions which can mitigate misbehaving
    parties in these scenarios.
\end{abstract}


\keywords{Blockchain; Smart Contract; Payment Channel; Watchtower; Short-lived Assertions} 

\maketitle

\section{Introduction}
\label{sec:intro}
Scalability is seen as the main limitation of currently deployed permissionless
blockchain systems~\cite{croman2016scaling}. 
Mainstream platforms like Bitcoin or Ethereum can
handle from a few to several transactions per second, 
and clearly, without
handling high transaction volumes these systems are unlikely to achieve mass
adoption.
While more performant consensus algorithms are being constantly proposed (i.e., layer-1
solutions)~\cite{algorand17,ouroboros17,snowwhite16,thunderella18,sleepy17}, 
the concept of payment channels has emerged in parallel.
Payment channels and network build upon them, like Lightning Network~\cite{lightening} or
Raiden~\cite{Raiden}, are layer-2 solutions which involve on-chain smart contracts only for
channel management (i.e., creation, termination, or dispute) while all regular
transactions are exchanged off-chain. 
That allows untrusting parties to transact off-chain as simple and fast as
exchanging signed messages which is especially important for micropayments.

A party can close a channel by sending the latest statement to the smart
contract, which verifies whether the statement is properly signed by 
both parties and transfers the tokens accordingly.  
The smart contract does not know what is the
off-chain state, thus the payout operation is timeouted to give another party
time to dispute. 
Disputes are resolved by the contract itself, simply accepting
the freshest message (e.g., messages can be ordered by unique ascending
counters).  
One security assumption in this setting is that a party has to be
online to detect and prevent misbehavior of its peer sending a stale message to
close the channel. 

To relax the \textit{always online} assumption, the concept of
outsourcing arbitration to so-called watchtowers (or monitors) was proposed~\cite{monitor16}.  
They are
highly available third parties which are contacted by transacting parties with
every transaction. 
Therefore, an involved watchtower knows the current off-chain
payment state and its duty is to monitor the on-chain state. 
In the case of a
misbehavior, i.e., when a party tries to close the channel using a stale state,
the watchtower will be able to trigger the dispute process by providing the
current state on behalf of the inactive party. 

Unfortunately, watchtower schemes present in the literature have some important
limitations (see details in \autoref{sec:related}). 
They need to observe every
channel contract separately, and they are not fail-safe, allowing misbehavior under
their unavailability, or still requiring a significant payout
timeout.\footnote{Long payout timeouts are undesired as 
a party's deposit cannot be used during this time.}

To mitigate these drawbacks we propose a concept of fail-safe watchtowers. 
In our construction, watchtowers do not watch channel contracts constantly anymore. 
Instead,
they observe and assert off-chain transactions, and periodically send a 
message to the blockchain, encoding the latest states of watched payment
channels.  
A party closing its channel has to authorize this operation by
confirming its state with the fresh information submitted by the watchtower.
Sending a ``positive'' information can eliminate long timeouts present in
current systems where watchtowers send only ``negative'' information (correcting
an incorrect state).
Moreover, our design of watchtowers incentivizes them to confirm (or reject) a given state as
soon as possible.
This construction allows to implement quick channel termination in the common
case (i.e., when the watchtower is online), is fail-safe as a longer timeout
will be triggered only when the watchtower is offline, and allows watchtowers to
scale better as they can be implemented as light blockchain clients.
Furthermore,
we show that for a large number of payment channels, 
watchtower schemes in general may be
economically ineffective. 
To improve security in these cases we introduce
short-lived assertions, that allow channel contracts to distinguish fresh and
stale assertions and process them accordingly (minimizing timeouts and
misbehavior impact). 

In this work we make the following contributions:
\begin{enumerate}
    \item we propose a novel construction of watchtowers, which in contrast to
        prior proposals, minimize payout delays in the common case (i.e., when
        the parties are honest), are fail-safe, efficient, and
        privacy-preserving (see \autoref{sec:watchtower}), 

    \item we show the economic limitations of watchtower schemes and introduce
    novel short-lived assertions for payment channels, which is a complementary
    solution suitable for multiple scenarios, like micropayments 
    (see \autoref{sec:assertions}),

    \item we implement and evaluate the proposed systems to demonstrate
    their security, feasibility, and efficiency  (see \autoref{sec:analysis}).
\end{enumerate}

\section{Motivation and Related Work }
\label{sec:related}
\paragraph{\textbf{Payment Channels.}} 
Currently existing decentralized permissionless blockchain platforms
(e.g., Bitcoin and Ethereum) can hardly compete against those centralized 
financial service providers like VISA or Paypal in terms of processed transactions per second (TPS).
To put this into perspective, Bitcoin only supports 
up to 7 TPS~\cite{croman2016scaling}, while VISA can handle an average of 150
million transactions every day or around 1736 TPS~\cite{visamainpage}.
Today, the scalability problem of such blockchains has been regarded as a
main obstacle for their further adoption as large scale payment networks.

The community has put forth various proposals to increase 
the throughput of blockchain platforms. 
Reparameterization of block sizes and block intervals of existing systems has been proved to be
ineffective~\cite{gervais2016security} in boosting the performance. 
Designing new consensus protocols
from scratch~\cite{luu2015scp,eyal2016bitcoin} is promising, but requires significant implementation, deployment, and adoption costs to be used in practice.
Moreover, it takes a long time to understand and build up the security confidence of new consensus protocols.  

Another line of research is to minimize the interactions with blockchains by using
off-chain protocols --- the so-called \textit{layer-2} solutions. 
Recently developed payment channels and state channels together with their extensions~\cite{bolt,chiesa2017decentralized,malavolta2017concurrency} belong to this approach.
The main idea of layer-2 solutions is to make payments and state transitions 
by exchanging off-chain messages among small groups of parties privately, and use the
blockchain as a backstop only when disputes arise.   
As a result, parties
can make a huge number of \textit{off-chain} transactions, while only 
their initial deposit and the finalized balances will be included \textit{on-chain}.
Moreover, this approach is backward compatible with existing systems and
is orthogonal to other scalability solutions, thus it has become 
a promising approach for improving the scalability of blockchains and cryptocurrencies.

The \textit{layer-2}
payment protocols have evolved from 
simplex and probabilistic 
payment channels~\cite{hearn_micpayment,Dmitrienko_simplex,pass_probchannel,hu_probchannel}
to fully duplex payment channels~\cite{decker2015fast,bolt,lind2017teechain,dziembowski2017perun}, 
based on which payment networks~\cite{lightening,Raiden,miller2017sprites,khalil2017revive} 
can be constructed where users can send payment to each other 
without establishing a direct channel from scratch. 
Besides, state channels~\cite{dziembowski2018general} generalize the concept of payment channels 
to support not only payments but also the execution of arbitrary state transitions.
We refer the readers to~\cite{sok1,sok2} for a more comprehensive discussion
of the layer-2 off-chain protocols.
In this work, we mainly focus on enhancing the security of payment channels, although our schemes can be generalized to state channels. 

\paragraph{\textbf{Watching Service.}} 
\label{sec:watching}
The dispute resolution mechanisms of payment channels require that the parties are
online and synchronized with the blockchain to prevent the channels from finalizing with invalid stale states.   
To alleviate this strong {\it always-online} requirement, 
schemes outsourcing the responsibility of monitoring the blockchains and issuing
challenges in case of disputes to third-party watching services are proposed. 
Users outsource their latest off-chain states to the watching service before parting offline. 
Watching services then act on behalf of the users to protect their funds. 
Users can still verify the correct behavior of watching services and punish them in case of ineligibility and non-compliance~\cite{sok2}. 

Monitor~\cite{monitor16} is a watching service for the Lightning network.
However, due to the nature of replace-by-revocation channels in the Lightning
network, the Monitor has to store all transactions within the channel, making it
inefficient to provide service to a large number of busy channels simultaneously.
The WatchTower scheme proposed by Osuntokun~\cite{hard_lightning15} reduces the storage
requirement of Monitor from $O(N)$ to $O(1)$. 
But this improvement relies on a new opcode to parse and verify a signed
message, which is currently unavailable in Bitcoin. 
Therefore, WatchTower cannot take the place of Monitor without updating the underlying layer-1 protocol.
More importantly, both Monitor and WatchTower lack a mechanism for customers to
recourse when the watching service fails to settle a dispute, since no verifiable evidence is provided.

To address the inefficiencies and limitations of prior watching services, 
PISA is proposed which works for generic state channels~\cite{mccorry2018pisa}. 
In PISA, the third party (named as a custodian)
appointed by customers to watch the state channel and cancel execution forks 
only needs to store the hash of the
most recent state, and thus only $O(1)$ storage is required.
Moreover, PISA provides publicly verifiable cryptographic evidence in case
the custodian fails, based on which the involved customer is able to penalize
the custodian.
However, the customer of custodian is implicit to the public and the custodian can 
use the same deposit as a safety for all customers. 

The always-online assumption is moved in PISA from parties to a
custodian and moreover the system is not designed in a fail-safe way -- i.e.,
an unavailable/failed custodian silently accepts malicious disputes, thus to
mitigate it
payout timeouts should be set long enough, freezing deposits of parties. 

The {\it Disclose Cascade Watch Commit} (DCWC) scheme~\cite{dcwc18} 
proposes to incentivize multiple third parties to provide watching services,
where only watchtowers whose participating evidences are included
on chain get paid. 
However, each party has to be responsive and synchronized with the 
blockchain at all times.
While DCWC relates to node failures as well, 
security in general and topology-based attacks in particular have been mostly neglected, though.
\medskip

\paragraph{\textbf{Watching Services and Trust.}} 
\label{sec:watching:trust}
Although it may be counter-intuitive that parties of a decentralized platform
employ a centralized entity (i.e., a watchtower) for dispute resolution,
there are essential differences between these approaches and the centralization
of the underlying blockchain platform.
First of all, watching services (as well as payment channels and other
upper-layer applications) are external to the platform and its consensus rules,
therefore they do not violate the decentralization of the platform itself.
Second, the trust placed in watching services is ``local'' to the parties that
deploy it, i.e., no other blockchain participants have to trust (or even be
aware of) such a service.  
Moreover, some designs (including ours) allow a client to
be represented by a watching service in a way opaque even to the peer of the
payment channel. 
Finally, designs of watching services can benefit from the properties of the
underlying platforms, which include transparency of service actions (allowing
early detection of misbehavior) or cryptoeconomic incentives aiming to
facilitate correct service behavior by punishments and rewards.

\section{Preliminaries}
\label{sec:pre}
\subsection{Blockchain and Smart Contract}
\label{sec:preblockchain}
Over the last decade, the blockchain technology has been pioneered by Bitcoin~\cite{crosby2016blockchain}
which enables mutually untrusted parties to reach consensus over the state of a distributed 
and decentralized global ledger confirming and serializing ``transactions''.
By using a similar consensus protocol and extending the scripting language of
Bitcoin to a Turing complete programming language, 
Ethereum~\cite{buterin2014ethereum} was built -- 
a general-purpose blockchain system
with its native cryptocurrency {\it ether} which can execute
programs over a decentralized and
replicated state machine named as the {\it Ethereum Virtual Machine}~(EVM). 
Due to the Turing completeness, with Ethereum
one can implement self-enforcing programs called smart contracts with 
nearly arbitrary business logic.

To prevent parties from intentionally or unintentionally abusing the system resources,
computation and memory utilization in Ethereum are charged
in {\it gas} which can be purchased with ether with certain exchange rate.
Therefore, maintaining a smart contract storing data objects of large size
and performing complex computations can be very expensive. 
To facilitate lightweight communication between smart contracts and deploying them clients, 
a mechanism of events and logs is introduced in Ethereum. 
Smart contracts can emit events and write logs to the blockchain. 
When they are called, the emitted messages
are stored in the transaction's log associated with the address of the contract.
Log and event data are not accessible from within contracts but can be captured
by clients reading (part of) the blockchain.

\subsection{Payment Channels}
A payment channel is a relationship established between two participants
that wish to transact, such that bi-directional payments
can be made by exchanging off-chain messages if both parties behave honestly.
The state of the relationship resulted from
those off-chain transactions
will eventually be recorded on blockchain, which only happens when
the payment channel is finalized. 
A set of rules are enforced by the 
blockchain to guarantee that the channel is finalized with the correct state even
if all parties are malicious. The life cycle of a payment channel 
can be divided into the following phases.

\paragraph{\textbf{Setup.}}
Before the payment channel is established, the involved parties need 
to deposit certain amount of coins to a payment channel contract that both parties
agree on. These coins are going to be redistributed continuously among them
during the payment phase.

\paragraph{\textbf{Payment.}}
During this phase the participants of the channel can make an arbitrary number of payments
(within the channel capacity induced by the initial deposit)
to each other by exchanging jointly signed off-chain messages.
Since there is no interaction with the blockchain, the transaction throughput
is only limited by computing resources and the actual communication network for exchanging messages between the parties, effectively bypassing the bottleneck due to the underlying consensus mechanism
of the blockchain. 
 
\paragraph{\textbf{Closure.}}
Any party can try to close the payment channel by issuing a state signed by both parties
to the blockchain. 
Before the channel is finalized with this submitted state,
the payment channel reserves a time window for the other party to intervene
and invalidate the previously submitted state with a {\it newer} signed state.
Eventually, the most recent state seen by the blockchain will be accepted after
a timeout.
According to this rule, the payment channel may end up with a non-latest state
against the legitimate interest of one side who fails to provide proper evidence to 
the payment channel contract
showing the invalidity of the previously submitted state within the time window. 
Therefore, it is essential to observe the blockchain constantly to ensure
that only the latest state will be accepted by the blockchain.

\section{Fail-safe Watchtowers}
\label{sec:watchtower}
\subsection{Overview}
\label{sec:overview}
Before proceeding with the details, we give a brief overview of our design,
which is depicted in~\autoref{fig:overview}.
\begin{figure}[h!]
	\centering
	\includegraphics[width=\linewidth]{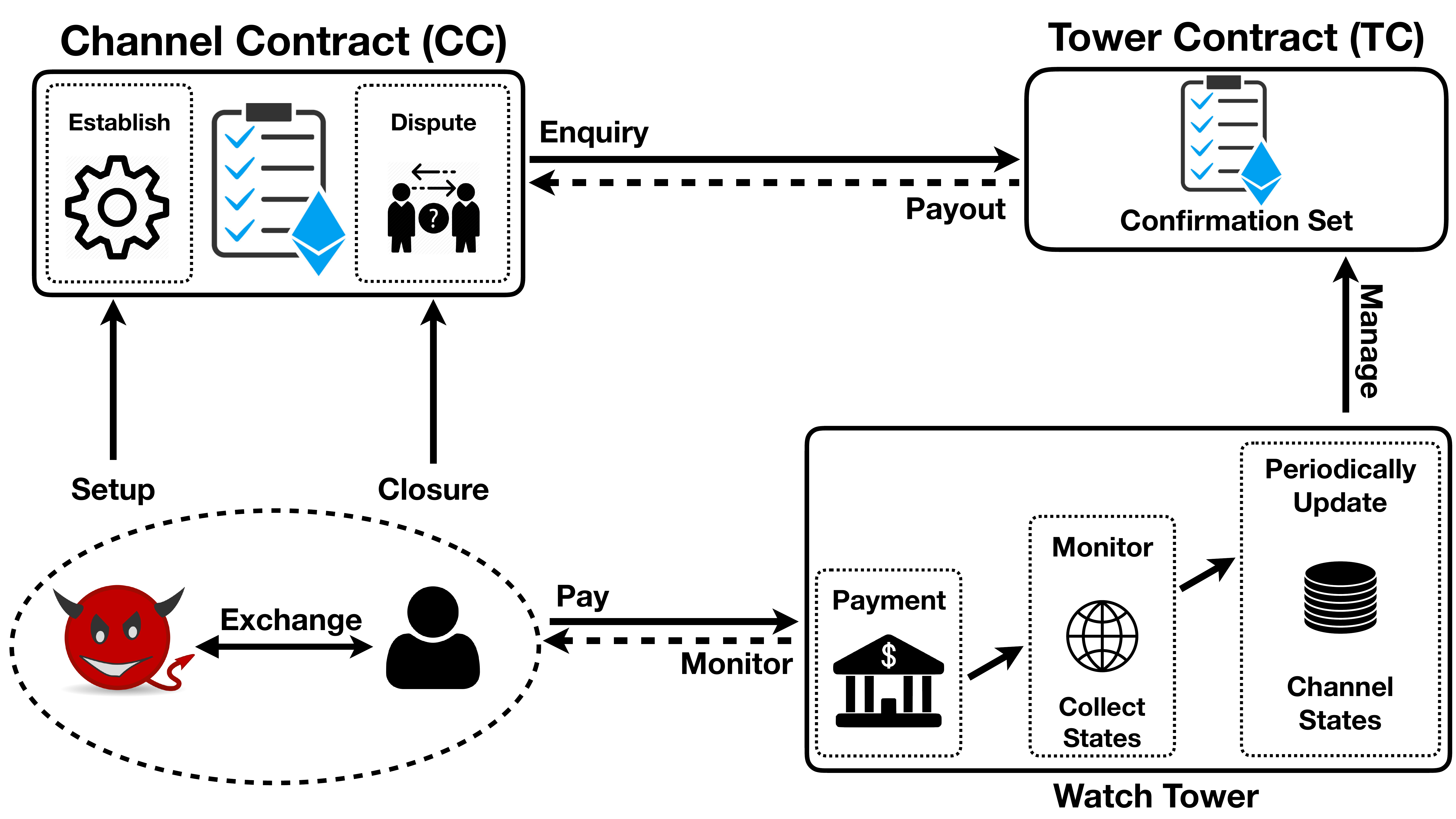}
    \caption{Protocol Overview.}
	\label{fig:overview}
\end{figure}

Two participants first register with a payment \textit{channel contract} by
depositing some initial funds there. 
Then, through the \textit{tower contract} they can employ
a third party called a watchtower to protect the payment state of their own channel.
Afterwards, the registered channel participants communicate off-chain
to authorize and make payments with each other, redistributing the
initial deposits amongst themselves.
To keep track of the transactions, a monotonic transaction counter is incremented for 
every transaction appearing in the channel. 
The watchtower receives and verifies every off-chain transaction, and records
the latest state of the payment channel.
A watchtower can monitor the off-chain states of multiple channels simultaneously. 
It manages a tower contract and periodically updates it with the 
so-called {\it confirmation set} which is used to determine payout process.
To close the channel, a party submits a closure transaction to the channel
contract which delegates the final payout decision to the tower contract. 
Any participant is able to
invoke dispute if any disagreement happens.

We design protocol with the following goals in mind:
\paragraph{\textbf{Security.}}
We assume an adversary who cannot compromise 
\begin{compactitem}
    \item the underlying cryptographic primitives (e.g., digital signatures,
        hash functions, etc.),
    \item the properties of the underlying blockchain platform, 
    \item the smart contract runtime environment, 
    \item a watchtower (but an adversary can be one peer of the payment
        channel); however, we discuss some cases of misbehaving watchtowers.  
\end{compactitem}

Under these assumptions our watchtower
protocol should prevent unregistered parties from bypassing identity
authorization.  In particular, any unauthorized party should not be allowed to
trigger channel closure and the potential dispute processes.  Moreover, any
unauthorized payment state should not be verified by a channel contract
successfully.  In addition, we aim the watchtower service to be fail-safe, i.e.,
an unavailable watchtower (e.g., attacked via a (D)DoS) should not silently
accept channel closures and its default behavior should be rejective (in
practice, a long timeout should be triggered in that case as otherwise parties
would not recover from an unavailable watchtower).

\paragraph{\textbf{Efficiency and Low Cost.}}
Our protocol should operate as efficiently as possible. 
There should be no 
efficiency bottlenecks with respect to throughput, storage, latency, etc. 
which could hinder its applicability in real-world scenarios.
Moreover, the cost of applying proposed solutions
in terms of storage, computation, and blockchain-related fees should be minimized. 

\paragraph{\textbf{Scalability.}}
Watchtowers should be able to handle a large number of payment channels and the
overheads introduced should not increase significantly with
the increasing number of channels and transactions. 

\paragraph{\textbf{Privacy.}}
Transaction details should be known only to transacting parties. 
In particular,
watchtowers should not learn transactions or their patterns while protecting the
channel. 
We emphasize, that in the case of dispute or channel closure, privacy of these
``closing'' transactions
may not be achieved as these actions are conducted over a publicly readable
blockchain.

\paragraph{\textbf{Accountability}}
Watchtowers should fulfill duties once received payment from parties. 
In particular, employed watchtowers should collect messages and record the latest state for 
a given channel. 
They also need to deal with closure events, resolve potential dispute,
and proceed payout in time. 
Otherwise, an honest party should always be able to prove 
the wrongdoing of a watchtower and withdraw payment.

In the following sections, we will introduce the details and interactions of our protocol.
\subsection{Channel Setup}
\label{sec:setup}
We assume two mutually untrusting parties, Alice and Bob, that wish to establish
and use a bidirectional payment channel. 
To this end, one of them has to deploy
a payment channel contract on the blockchain as shown in \autoref{alg:paymentcontract}.
Afterwards, the unique address \texttt{cid} of this payment channel contract is
used to reference the payment channel. 
The life cycle of the channel contract
can be divided into three phases ($\bot$, \texttt{OK}, and \texttt{DISPUTE}),
which is indicated by a global flag of the contract.

Alice and Bob then register their public keys ($pk_A$, $pk_B$) 
and deposit coins
into the contract for their subsequent transactions within the channel.
To achieve this, one party (say Alice) is responsible for initializing channel
using the \texttt{setup()} method.  
Before the actual setup, she is supposed to receive a signature from 
the counter-party (say Bob) where the signed message is the initial state ($s_0$) between 
parties and the $bal_A$ in $s_0$ represents the amount Alice is depositing. 
Then, she deposits $bal_A$ and assigns a watchtower by 
providing the address of a
watchtower contract. 
This watchtower contract is maintained by a third party
who will be employed to monitor the channel and prevent it
from finalizing with outdated states. 
The flag transitions from $\bot$ to \texttt{OK}.
Afterwards, Bob also needs to get the signature from Alice that signs
a new exchange state ($s_1$) and then he freezes $bal_B$ in $s_1$ via the \texttt{deposit()} 
method
shown in~\autoref{alg:paymentcontract}.
These atomic exchanges enjoy two advantages: 1) ensures the \texttt{amount}
deposited by each party is signed and agreed by the counter-party, 2) avoids
clash behavior from any party (in the case of failed setup any party can get its
deposit back). 
For an honest party, we allow her to get coins back
via directly invoking close request and inputting signed state.

\subsection{Watchtower Employment}
\label{sec:linear}
Before Alice and Bob can securely make off-chain transactions via the
payment channel, they need to employ a watchtower to monitor the channel. 
One of the participants pays the watching service via a watchtower 
contract shown in~\autoref{alg:wtcontract}.   

Although, watchtowers can be introduced to a channel in many ways, for a simple
description we assume
that Bob first deposits
\texttt{amount} in the tower contract and then sends a signature with \texttt{cid} 
to the watchtower which
verifies the authenticity of the payment by simply checking if Bob
deposits the correct amount. 
As a result, the payment channel with \texttt{cid} is protected by the
watchtower. 
A watchtower is considered trusted, however, a customer (i.e., a party) should be able
to withdraw \texttt{amount} if he proves watchtower is wrongdoing or not fulfilling eligible
job. 
To successfully withdraw a deposit, a customer should submit a \texttt{challenge} 
via the channel contract that proves the watchtower did not either close the given channel with 
the latest state or respond to the channel contract in time. 
We employ a tolerance timeout $t$ motivating the watchtower to respond faster and
more efficiently. 
The watchtower will not lose anything if it replies to the channel contract
before $t$. 
Otherwise, a longer timeout $T$ is initialized and a \textit{linear rewarding} would be triggered.
This mechanism rewards the watchtower inversely proportional to the watchtower's
response time.
It prevents unavailable or failed watchtowers from 
getting undeserved fees and incentivizes watchtowers to maintain high
availability and to shorten timeouts.

\subsection{Regular Payment}
\label{sec:microexchange}
After a watchtower is employed, Alice and Bob are ready to make
off-chain transactions via the payment channel by participating
in an interactive protocol depicted in \autoref{fig:initial}.

\begin{figure}[tb!]
\centering
\includegraphics[width=0.7\linewidth]{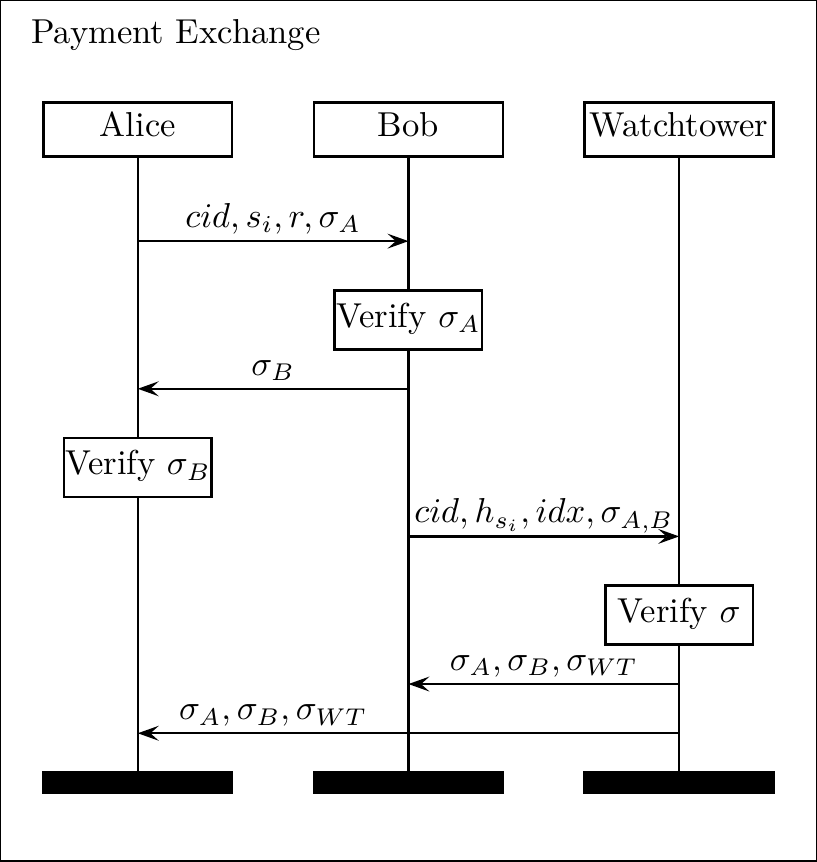}
\caption{Payment Exchange}
\label{fig:initial}
\end{figure} 

As in~\cite{dziembowski2017perun}, the payment channel is modeled as a state machine,
and the state transitions from $s_i$ to $s_{i+1}$ 
are carried out by off-chain transactions signed 
by both Alice and Bob. 
We encode each state $s_i$ as a tuple
$(bal_A, bal_B, idx)$, where
$bal_A$ and $bal_B$ are the corresponding balances of Alice and Bob within the channel,
and $idx$ is a monotonic index starting from $0$, which is incremented for every state transition
and the state with the largest $idx$ is considered the latest state.

To make a new transaction, Alice computes
the hash value 
$h_{s_i} = \texttt{H}(s_i \parallel r)$
with recalculated balances ($bal_A$, $bal_B$) and incremented $idx$, where 
$r$ is a large (e.g., 256-bit) random number. 
Then she sends the signature 
$$\sigma_A = \texttt{Sign}(sk_A,  \texttt{cid} \parallel  idx \parallel h_{s_i})$$
together with the message $\texttt{cid} \parallel s_i \parallel r$
to Bob. 
Next, Bob verifies the signature and authorizes the transaction
by sending his own signature
$\sigma_B = \texttt{Sign}(sk_B, \texttt{cid} \parallel  idx \parallel h_{s_i})$
to Alice. 
Meanwhile, he verifies the signature $\sigma_A$ and sends
$\texttt{cid}$, $idx$, $h_{s_i}$, $\sigma_A$ and $\sigma_B$ to
the watchtower. 
After successful verification of the 
signatures, the watchtower records it as the latest state for
the given payment channel and delivers a receipt
$\sigma_{{WT}} = \texttt{Sign}(sk_{WT}, \texttt{cid} \parallel  idx \parallel h_{s_i})$
to both Alice and Bob. 
The watchtower will not sign and send newer state to each party once
any closure or dispute happened for avoiding race conditions.

If all parties follow the protocol honestly, 
then all computations and message exchanges can be 
performed off-chain. 
Moreover, the watchtower does
not learn about the balances of Alice and Bob since it only receives
the hash value $h_{s_i}$. 

\begin{algorithm}[!t]
\setlength{\abovecaptionskip}{1.0cm}
    \caption{Payment Channel Contract.}
    \label{alg:paymentcontract}
    \SetAlgoVlined
    $Flag: \bot, \texttt{OK}, \texttt{DISPUTE}, s: NULL$\\
    $ddl: 0, isRspd: NULL, perc: 0, end: 0$\\
    $event~Closure(cid, state, r), event~Dispute(cid, state, r)$\\
    \medskip
    \SetKwProg{func}{function}{}{}
    \func{setup($party, \$bal, wtAddr, wt$)}{
        $assert(Flag = \bot)$\\
        $pk_A \leftarrow party,~~bal_A \leftarrow \$bal, pk_{wt} \leftarrow wt$\\
        $watchTower \leftarrow TContract(wtAddr),~Flag \leftarrow OK$\\
    }
    \func{deposit($party, \$bal$)}{
        $assert(Flag = OK),~pk_B \leftarrow party,~~bal_B \leftarrow \$bal$\\
    }

    \func{close($s_i, r, \mathcal{P}_{\sigma_k} $)}{
        $assert(Flag = OK~and~verify(pk_k, (s_i,r), \mathcal{P}_{\sigma_k}))$\\
        $Flag \leftarrow DISPUTE, s \leftarrow s_i$\\
        $ddl \leftarrow now + t, end \leftarrow ddl + T, isRspd \leftarrow False$\\
        $watchTower.close(this, s_i),emit~Closure(this, s_i, r)$\\
    }

    \func{dispute($s_i, r, \mathcal{P}_{\sigma_k}$)}{
        $assert(verify(pk_k, (s_i,r), \mathcal{P}_{\sigma_k})~and~s_i.idx > s.idx)$\\
        $assert(Flag = DISPUTE~and~now < end), s \leftarrow s_i$\\
        $watchTower.close(this, s_i), emit~Dispute(this, s_i, r)$\\
    }

    \func{payout($s_i, isPay$)}{
        $assert(Flag = DISPUTE)$\\
        \uIf{$msg.sender = watchTower$}{
            \If{$now > ddl~and~!isRspd$}{$perc \leftarrow (now - ddl)/ T$}
            $isRspd \leftarrow True$\\
            \eIf{$isPay = True$}
            { 
            $assert(bal_A + bal_B \geq s_i.bal_A + s_i.bal_B~and~s_i = s)$\\
            $partyA.transfer(s_i.bal_A), partyB.transfer(s_i.bal_B)$\\
            $Flag \leftarrow \bot, s \leftarrow s_i$}
            {$end \leftarrow now + T$}
        }
        \uElseIf{$msg.sender = any~party$}
        { 
            $assert(now > end~and~s_i = s)$\\
            $assert(bal_A + bal_B \geq s_i.bal_A + s_i.bal_B~and~s_i = s)$\\
            $partyA.transfer(s_i.bal_A), partyB.transfer(s_i.bal_B)$\\
            $Flag \leftarrow \bot, s \leftarrow s_i$}
    }

    \func{challenge($s_i, r, \sigma_{wt}$)}{
        $assert(verify(pk_{wt}, (s_i,r), \sigma_{wt})~and~now > end)$\\
        \If{($Flag = \bot~and~s_i.idx > s.idx$)~or~($isRspd=False$)}
        {$watchTower.withdraw(this, msg.sender, 1)$\\}
        \If{$perc > 0$}{$watchTower.withdraw(this, msg.sender, perc)$\\}
    }

\end{algorithm}

\begin{algorithm}[t!]
    \caption{Tower Contract.}
    \label{alg:wtcontract}
    \SetAlgoVlined
    $mapping (integer \Rightarrow \textlangle cidArray[], stateArray[]\textrangle )~Channels$\\
    $mapping (address \Rightarrow \textlangle customer, deposit\textrangle)~Balances, k \leftarrow 0$\\
    \medskip
    \SetKwProg{func}{function}{}{}
    \func{deposit($cid$)}{
        Balances[cid].customer $\leftarrow$ msg.sender;\\
        Balances[cid].deposit $\mathrel{{+}{=}}$ msg.value;
    }
    
    \func{withdraw($cid, victim, percentage$)}{
        assert(victim = Balances[cid].customer)\\
        assert(cid = msg.sender)\\
        \If{$percentage > 1$}{$percentage \leftarrow 1$}
        victim.transfer(Balances[cid].deposit $\times$ percentage)\\
    }


    \func{close($cid, s_i$)}{
        \texttt{/*find() returns the index of $cid$ }\\
        \texttt{\quad in cidArray, -1 means non-exist*/}\\
        $cidIndex \leftarrow Channels[k].cidArray.find(cid)$\\
        \eIf{$cidIndex = -1$}
        {$Channels[k].cidArray.push(cid)$\\$Channels[k].cidArray.push(s_i)$}
        {   
            $s \leftarrow Channels[k].stateArray[cidIndex]$\\
            $Channels[k].stateArray[cidIndex] \leftarrow s_i$}
    }
    
    \func{update($Confs$)}{
        $assert(msg.sender = owner)$\\
        $k \leftarrow k + 1$\\
        $respond(Confs, k-1)$\\
    }

    \func{respond($Confs,n$)}{
        \For{$0 \leq  j < Channels[n].cidArray.length()$}{
            $s \leftarrow Channels[$n$].stateArray[j]$\\
            \eIf{Confs[j] = 1}    
            {Channels[$n$].cidArray[j].call(payout($s$, True))} 
            {Channels[$n$].cidArray[j].call(payout($s$, False))}
        }
        $delete~Channels[n]$\\
    }
\end{algorithm}

\subsection{Watchtower Contract Management}
Tower contracts are deployed and managed by watchtowers.
A tower contract (see \autoref{alg:wtcontract}) maintains 
two mapping (or dictionary) data structures: 
\texttt{Balances}, \texttt{Channels}, and a bitmap structure
confirmation set (\texttt{Confs}). 

Items in \texttt{Balances} can be referenced by \texttt{Balances[cid]}, where
\texttt{cid} is the address of a channel contract. 
There are two fields
for each item and they can be accessed via \texttt{Balances[cid].customer} and
\texttt{Balances[cid].deposit}, which record the account (public key) of 
an external owned account and its deposit respectively.

The data structure \texttt{Channels} with integer keys stores tuples
$<cidArray[], stateArray[]>$
whose entries are two arrays, where $stateArray[i]$ is supposed to keep track
of the state of the channel contract with address (or cid) $cidArray[i]$.  

\texttt{Confs} stores a bit-vector (or a \textit{bitmap}) 
whose $j$th bit indicates whether the
channel with address
\texttt{Channels[k].cidArray[j]}
is allowed to be closed, where $k$ is an update counter.  
Such a decision is encoded via a single bit only.
The \texttt{Confs} data structure is periodically updated by a watchtower. 
A rational watchtower wishing to maximize its profit 
is expected to respond to
a channel contract before a tolerance timeout $t$ as otherwise its fee will
decrease linearly with the response delay.
Since a single watchtower can be employed
by multiple channels, during each update period, a
watchtower is expected to capture multiple channel-closing 
transactions through the \textit{closure events} emitted by
the channel contracts, containing the values of \texttt{cid},
\texttt{$s_i$}, and a random number $r$, which are necessary for
the watchtower to verify the validity of the intended closing state. 

Let us assume that during the update period, the watchtower
captures a sequence of closure events for channels with addresses
$$[cid_0,cid_{1}, \cdots,cid_{m-1}].$$
For each channel, the watchtower extracts relevant
information from the corresponding event and checks
whether the closing state \texttt{$s_i$} is compatible with its
local record of the latest state of the channel.
As a result, a confirmation set 
$[b_0, b_1, \cdots, b_{m-1}] \in \{0, 1\}^m$
is constructed such that $b_j = 1$ if and only
if $cid_j$ is allowed to be closed. 
Then the watchtower updates \texttt{Confs}  
in the tower contract by calling its \texttt{update()} method
with the newly constructed confirmation set. This triggers 
the \texttt{respond()} method which in turn invokes 
the \texttt{payout()} method of the channel contract.

\subsection{Payment Closure}
\label{sec:terminate}
Any payment channel participant who wants to finalize the channel 
with a specific state $s_i$ can initiate the channel closure process
by invoking the \texttt{close()} method of the channel contract presented 
in~\autoref{alg:paymentcontract}.
After successfully verifying the supplied signatures together with the 
associated state $s_i$ and the random number $r$, the channel contract changes its flag
to \texttt{DISPUTE} and sets a tolerance timeout $t$ from now on. 
Afterwards, the \texttt{close()} method of the watchtower contract can be called
with the state $s_i$.

Then the watchtower contract updates the mapping data structure (\texttt{Channels})
by placing the cid and the state $s_i$ into $cidArray$ and $stateArray$ or
only updating $stateArray$ when cid already exists in the $cidArray$
(see line 15 - line 21 of \autoref{alg:wtcontract}).

The interaction between the channel contract and the tower contract
is finished when the watchtower calls 
its \texttt{update()} method to update the confirmation set structure.
Then, the \texttt{update()} method invokes the \texttt{respond()} method of the 
tower contract to
determine which channels are allowed to close according to values in \texttt{Confs}.
The \texttt{respond()} method will call the \texttt{payout()} 
method of all channels appearing in the 
\texttt{Channels} data structure
with the decisions on if they are allowed to be closed. 
\medskip

To illustrate different cases of this process, we show an example
scenario, where we assume that Alice and Bob have exchanged three payment 
tuples:
$$T_1:(10, 0, 0), T_2:(7, 3, 1), T_3:(4, 6, 2).$$
As depicted in \autoref{fig:terminate},
Alice closes $cid_i$ with \texttt{$T_3$}. 
Once noticed, 
a tolerance timeout $t$ (e.g., one hour) would be triggered in the channel contract
that allows the tower contract to confirm the submitted state and to proceed 
\texttt{payout} and consequently close the channel.
The tower contract replies to the channel contract based on a confirmation set
updated by the watchtower periodically. 
In particular, $cid_i$ would be allowed to close only if 
the corresponding index is set to \textit{1} in the confirmation set.

\begin{figure}[tb!]
    \centering
    \includegraphics[width=1.0\linewidth]{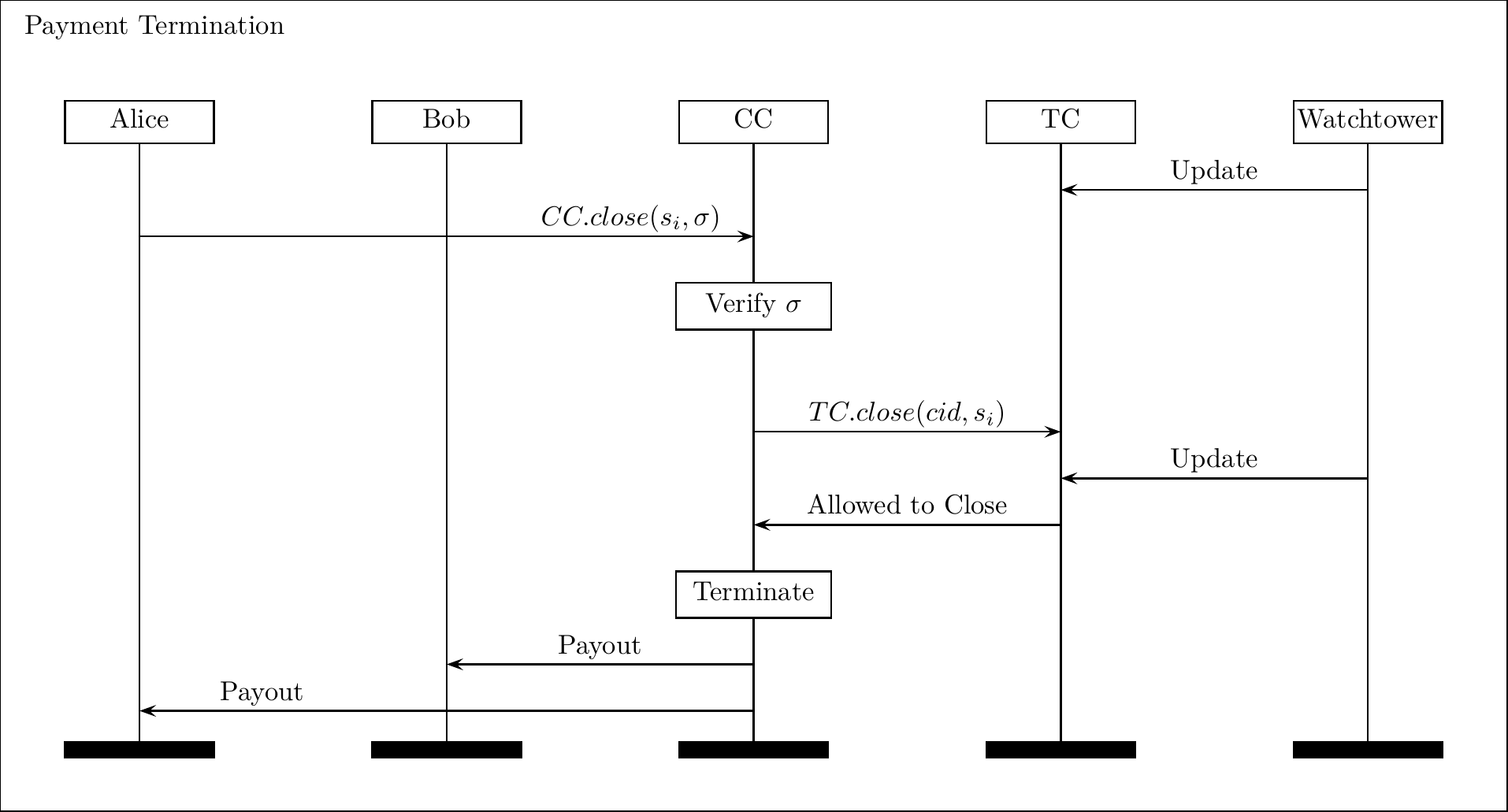}
    \caption{Payment Termination.}
    \label{fig:terminate}
\end{figure}

An alternative case is when Alice wants to misbehave by submitting the stale
state of
\texttt{$T_2$} and potentially benefiting due to her higher balance.
Therefore, she invokes a new close event passing \texttt{$T_2$} as the current
state.
However, in such a case, an honest watchtower
would deny the closure request since \texttt{$T_2$} is
a stale state. 
Therefore, the watchtower  sets \textit{0} in the corresponding position of 
the confirmation set for $cid_i$.
Once disallowed, 
a longer and fail-safe timeout \textit{T} (\textit{T} $\gg$ 
\textit{$t$}, e.g., a day or two) is initialized that ensures that the 
deposit will not be paid out if any misbehavior happened. 
A similar situation will happen
when the watchtower does not respond on time and \textit{T} is also initialized.
We assume that parties can connect to the blockchain at least once per
\textit{T}, thus Bob would notice the misbehavior performed by Alice.
Then Bob commits a 
dispute transaction with \texttt{$T_3$} as an input ,
aiming to prove that \texttt{$T_2$} is an older off-chain payment state. 
In such a case, the watchtower would resolve the potential dispute and close the channel faster,
namely, before \textit{T}.
Otherwise, \texttt{payout} will be proceeded by any party
after \textit{T} units of time. 

By this design, our construction minimizes waiting times and achieves fail-safe, 
as in common case
a watchtower would confirm payouts quickly.
Meanwhile, the long timeout \textit{T} prevents
any misbehavior from parties or unavailability and wrongdoings from the watchtower 
but without additional storage overhead introduced compared with existing 
watching service schemes.
Moreover, the watchtower can minimize costs confirming multiple pending channel closure requests
only with sending a single blockchain message.

\section{Implementation and Evaluation}
\label{sec:analysis}
To evaluate our design, we implemented the payment channel contract and the tower contract
using the Solidity programming language
(compiler version 0.4.24)
and we deployed our scheme on an Ethereum testnet. 
The off-chain watchtower and two parties, Alice and Bob, are 
implemented as a web server running
\texttt{Node.js v10.2.1} bundled 
with the \texttt{node-localStorage} package 
for storing \texttt{cid}, \texttt{$s_i$}, \texttt{Hs}, and signatures. 
We employ 
\texttt{web3.js}~\footnote{\url{https://web3js.readthedocs.io/en/1.0/}} to
interact with deployed smart contracts.
We use the Ethereum's ECDSA signature scheme as the default one, as Ethereum
provides a native and optimized support for it.

\subsection{Cost}
\label{sec:cost}
In our protocol, to invoke the closure procedure, 
a party sends a transaction with a proper state \texttt{$s_i$} 
which is verified first by the channel contract, 
then the watchtower interacts with the tower contract which handles the closure event. 
The main cost is introduced due to
the computation and storage whose utilization 
is charged by the Ethereum network. 
We perform a series of experiments to measure the cost 
in terms of gas consumption and its corresponding monetary cost.
\begin{table}[!tb]
  \caption{Channel closure cost (per request).}
  \label{table:closurecost}\centering
\begin{tabular}{L{1.0cm}R{1.5cm}R{1.2cm}R{1.2cm}R{1.2cm}}
\toprule
\multirow{2}*{Cost} & \multicolumn{4}{c}{Number of Closure Request}\\ \cmidrule{2-5}
          & \texttt{$10^0$}  &\texttt{$10^1$}  &\texttt{$10^2$} &\texttt{$10^3$} \\ \midrule
Verify    &57561 (14.0\%)    &57572 (19.1\%)   &57565 (20.6\%)   &57561 (20.7\%)  \\ 
Storage   &151849 (37.0\%)  &127812 (42.5\%)  &113802 (40.8\%)  &113793 (40.9\%) \\ 
Update   &84130 (20.5\%)    &8523 (2.8\%)    &860 (0.3\%)     &167 (0.1\%) \\ 
Payout   &35001 (8.5\%)   &34469 (11.5\%)    &34472 (12.3\%)   &34462 (12.4\%) \\ 
Misc      &82354 (20.0\%)  &72409 (24.1\%)    &72510 (26.0\%)   &72487 (25.9\%) \\
\bottomrule
Total   &410895         &300785           &279209         &278470   \\
USD       &0.198            &0.145             &0.135           &0.134         \\ \bottomrule
\end{tabular}
\end{table}

\paragraph{\textbf{Closure.}}
We conduct experiments for a different number of channel closure requests,
measure their gas cost and present the results, together with the cost converted to US dollars~\footnote{The conversion was according
to the gas price from \url{https://ethgasstation.info/} at the time of writing
the paper.}, in \autoref{table:closurecost}. 

We measure the cost of a channel closure for a different number of requests,
i.e., $10^{i}$ ($0 \leq i \leq 3$) requests. 
Conservatively, we assume that all the closure
requests are submitted
during the $t$-long time period (i.e., all these requests will be
processed using a single confirmation set).
From \autoref{table:closurecost}, we can see that the 
dominating operation cost is due to the storage required for \texttt{cid}
addresses and states \texttt{$s_i$}.
The cost also slightly decreases with the number of closure requests due to data
structure initialization cost that is paid upfront when initializing a tower contract.
Similarly, the cost of updating a confirmation set also shows a decreasing trend per request
since a watchtower updates it in batches periodically. 
The more closure events happen in
$t$ period the less each channel has to pay due to they share equally. 
Therefore,
as depicted the concept of \textit{fail-safe watchtowers} is cost-effective for handling a large number of channels. 
The payout procedure is the final step of the whole closure process which has a stable cost
with the channel number scaled.
The overall cost of each channel closure request is around \$0.14.

\paragraph{\textbf{Dispute.}}
\begin{table}[!h]
\centering
\caption{Cost for dispute.}
\label{table:disputecost}
\begin{tabular}{lrrrrr}
\toprule
\multirow{2}*{Cost} & \multicolumn{3}{c}{Steps} &\multirow{2}*{Total} \\ \cmidrule{2-4} 
& \texttt{$V_{state}$}  &\texttt{$V_{time}$}  &\texttt{$Closure$} 
\\ \midrule
Gas     &22752   &30337      &326123     &379212  \\
USD      &0.011    &0.015    &0.157  &0.183       \\ \bottomrule
\end{tabular}
\end{table}
As discussed in \autoref{sec:terminate}, 
our design allows any party to invoke a
potential dispute process. 
The gas cost of the dispute process is similar,
despite additional verification operations are required on the channel contract side.
We conduct the analogical experiments as in the previous case and the
results are shown in \autoref{table:disputecost}.
To successfully invoke a dispute, a party needs to submit a newer signed and valid
state \texttt{$s_i$}. 
The channel contract
also checks if current time is illegal before verifying the flag and validation of
\texttt{$s_i$}). 
From \autoref{table:disputecost}, we can see that the most common closure
case still
dominates the main cost within the dispute operation but the entire cost is
reasonably low at \$0.183.

\paragraph{\textbf{Confirmation Set.}}
As discussed in \autoref{sec:terminate}, a watchtower sends a single message to
the
blockchain network, creates an updated confirmation set periodically which requires to be stored
on the channel contracts side. 
The size (and consequently the cost) of this storage depends on the 
number of channel-closing requests. 
To give insights on the cost, 
we measure the cost of updating a confirmation set for the number of 
$10^{i}$ ($1 \leq i \leq 3$) channels
and present the obtained results in \autoref{table:updatecost}. 
As shown, 
the cost for the confirmation set
update operation with a small number of channels increases slightly. 
With the
number of channels growing, our protocol shows a good characteristics where the
only cost is around \$0.1 for $10^3$ channels per update transaction.
Such a low cost is achieved mainly by encoding every channel closure by only a
single bit.
\begin{table}[!h]
\centering
\caption{Cost for set update.}
\label{table:updatecost}
\begin{tabular}{lrrrr}
\toprule
\multirow{2}*{Cost} & \multicolumn{3}{c}{Number of Channel}\\ \cmidrule{2-4}
          & \texttt{$10^1$}  &\texttt{$10^2$} &\texttt{$10^3$} \\ \midrule
Total (gas)     &85232   &85990    &167363   \\
USD       &0.043    &0.044    &0.085         \\ \bottomrule
\end{tabular}
\end{table}

\subsection{Throughput}
\label{sec:throughput}
We evaluate the watchtower throughput running an instance on a system with macOS Sierra 10.12.6, Intel Core i5 CPU (2.7 GHz), and 8GB of RAM. 
We create $10^i$ ($0 \leq i \leq 4$) off-chain
transactions between parties,
each of them exchanges a state with a watchtower, then we measure the total time needed by the watchtower 
to handle the exchange steps, and compute the average time required per transaction. 
The obtained results are summarized in \autoref{fig:throughput}, where 
we can see that the number of
off-chain transactions handled per second raises 
when requests are processed in batches.
The throughput becomes stable when the number of requests is greater than $10^4$,
with the 
total time cost around 3.95ms per off-chain exchange.
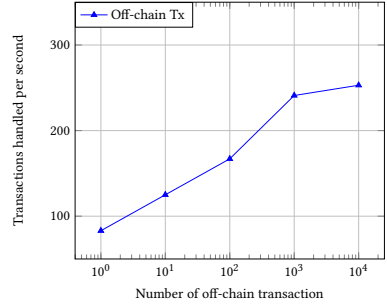
\begin{figure}[!h]
\centering
\begin{tikzpicture}[scale = 0.6,>=stealth]
\begin{axis}[
    legend cell align={left},
    ymin=50,
    ymax=350,
    grid=major,
    xmode=log,
    yticklabel style={
            /pgf/number format/fixed,
            /pgf/number format/precision=2,
            /pgf/number format/fixed
    },
    legend style={
    at={(0,0)},
    anchor=north west,at={(axis description cs:0,1.0)}},
    scaled y ticks=false,
    xlabel={Number of off-chain transaction},
    ylabel={Transactions handled per second}
]
\addplot[mark=triangle*, blue] coordinates {
    (1,83) (1e1,125) (1e2,167) (1e3, 241) (1e4,253) 
};
    \legend{Off-chain Tx}
\end{axis}
\end{tikzpicture}
  \caption{Throughput of the Watchtower.}
  \label{fig:throughput}
\end{figure}


\subsection{Efficiency Comparison}
In this section, we compare Lightning Channels, Duplex micropayment channels,
Raiden, PISA, and 
our work. 
All comparisons are based on the protocol descriptions presented in the context of \autoref{sec:related}. 
We focus on the number of signatures/transactions required for each step in the above protocols 
and the corresponding cost incurred.

\paragraph{\textbf{Signatures and Transactions.}}
\autoref{table:comparesig} highlights the number of signatures required for each
stage of Duplex~\cite{decker2015fast}, Lightning~\cite{lightening},
Raiden~\cite{Raiden}, PISA~\cite{mccorry2018pisa}, and our work. 
Next, we provide a comparison on the number of signatures and transactions that are required
for each stage of the protocols. 
For the fairness of the comparison, we assume that
only two participants are involved in all protocols.

Lightning and Duplex require both participants to sign the channel's state
first,
before cooperatively signing a so-called \textit{Funding Transaction}. 
The state in Lightning is simply a \textit{pair} of commitment transactions, 
thus there should be \textit{2$\times$2} transactions required in total.
While in Duplex the state is the first branch of an \textit{invalidation tree} which 
consists of $d$ nodes and two unidirectional channels. 
In Raiden and our work, the channel setup stage requires one party to sign a transaction to 
establish the channel with deposit. 
The counter-party can sign another transaction 
that aims to deposit coins and register identity by himself only. 
To initialize a channel in PISA, a customer invokes a
setup transaction with registering two parties each of whom 
will respectively deposit coins afterwards.
\begin{table}[!tb]
\centering
\caption{Comparison on the number of signatures/transactions required for each stage.}
\label{table:comparesig}
\begin{tabular}{lrrrrrr}
\toprule
\multirow{2}*{Cost} & \multicolumn{3}{c}{Stage}\\ \cmidrule{2-4}
          & \texttt{Setup}  &\texttt{Payment}    &\texttt{Dispute} \\ \midrule
Duplex~\cite{decker2015fast}     &$(d+2)\times2$      &1               &$1 \times 2$    \\
Lightning~\cite{lightening}      &$2\times 2$          &$1\times2$      &$\geq 3$       \\ 
Raiden~\cite{Raiden}             &2                   &1               &$\geq 2$       \\ 
PISA~\cite{mccorry2018pisa}      &$1+2$                 &3               &$\geq 3$       \\
Our work                         &2                   &3               &2       \\ \bottomrule
\end{tabular}
\end{table}

As for the payment stage,
Raiden and Duplex are used for a pair of unidirectional channels which
    require a single signature generated from a party to perform a newer payment.
The current balance of both parties should be maintained in Lightning. To update
their own balances and send an off-chain payment, 
both customers need to sign a new transaction.
Payment exchange in PISA requires one party to separately sign a
    \textit{command} message 
and new \textit{$state_i$}. 
Both signatures $\sigma^{cmd}_{k}$ and
$\sigma^{state}_{k}$ are sent to a counter-party who responds with a new agreed $\sigma$ then. 
In our work, three signatures are required for every payment, including two from
the transacting parties and another one created by the watchtower.

Our work shows a good performance in the case of
a dispute. 
Lightning requires the disputer to send commitment transaction first which
is followed by two transactions where each party claims its final balance and 
all unlocked conditional transfers. 
In some cases, the conditional transfers
can be associated with multiple lock times. 
Therefore, an additional transaction
to claim each conditional transfer would be incurred which requires $\geq$3
transactions in total.
In Duplex, both parties must sign the states of the unidirectional channels
that represent their latest payment from the counter-party and 
send both transactions to the blockchain. 
Similarly, each party in Raiden sends the same two transactions to the network as
in Duplex.  
However, each conditional transfer that needs to be unlocked requires 
the receiver to sign an additional transaction to claim it, which takes $\geq$2
transactions to settle the dispute.
When a dispute occurs in PISA, any party within the channel can enforce an on-chain
state transition via the dispute process. 
First, one party updates the state via \textit{setState} then initiates the dispute using 
\textit{triggerDispute}. 
Moreover, PISA allows all parties to input a command \textit{cmd} to be considered for the
state transition which requires $\geq$3 transactions in aggregate.
In our protocol, one party signs a single transaction to enforce a dispute. 
The channel will wait for a decision from the tower contract managed
by the watchtower who sends another single message to resolve the dispute.

\paragraph{\textbf{Cost.}}
\begin{table}[!t]
\centering
\caption{Comparison on the cost for each step.}
\label{table:comparecost}
\begin{tabular}{lrrr}
\toprule
\multirow{2}*{Step} & \multicolumn{3}{c}{Cost}\\ \cmidrule{2-4}
          & \texttt{Cost (PISA)}  &\texttt{Cost (Ours)}    &\texttt{USD} \\ \midrule
\multicolumn{4}{l}{Create Contract} \\ \midrule
Channel contract     &1892135      &1576314                    &0.761    \\
Custodian contract   &1609613      &\xmark                     &N.A.      \\ 
Tower contract       &\xmark       &1314215                    &0.635      \\ \midrule
\multicolumn{4}{l}{Interactions} \\ \midrule
Channel setup    &65504        &77328                      &0.037       \\ 
Withdraw employment  &89182               &86347               &0.042      \\\midrule
\multicolumn{4}{l}{Dispute Process} \\ \midrule
Party set state       &90130                   &\xmark               &N.A.       \\ 
Party initiates dispute  &78667                &\xmark               &N.A.       \\ 
Party submit command   &140275                 &\xmark               &N.A.       \\ 
Transit state on-chain &149494                 &\xmark               &N.A.       \\ 
Total resolve dispute   &458566                &379212              &0.183       \\ \bottomrule
\end{tabular}
\end{table}
To further compare our scheme with the most related work, we conducted
experiments using the PISA implementation.
\autoref{table:comparecost} presents the detailed cost comparison of our protocol and PISA~\cite{mccorry2018pisa}.
The result shows that one-time deployment cost of our channel contract is lower than
that of PISA since we remove the state transition and input methods, and instead
employ a tower contract to handle closure and resolve dispute procedures. 
A watchtower periodically sends single message into tower contract where 
only lightweight data structures are deployed to store $cid$ and $s_i$. 
It is also cost-effective compared with sophisticated logic of the PISA's custodian contract.
In the channel setup phase, our protocol needs to store one more 
information (the address of the watchtower)
while only public keys of both parties are saved in PISA. 
The cost of the channel setup is slightly 
higher than in PISA but the difference is not significant.
Both, our work and PISA allow withdrawing deposited amount if parties are able
to prove a watching service's wrongdoing. A customer in PISA first seeks recourse
then invokes
the refund process but the cost of these operations is similar in both protocols.
When a dispute happens in PISA, one party first sets collectively authorized state
(\textit{setstate}) then initiates dispute process via \textit{triggerdispute}. 
Afterwards,
the party submits command list (\textit{input}) and eventually a state transition is
executed on-chain under the aid of custodian contract. 
Our work requires one party to enforce dispute that proves the newer 
payment state has been performed off-chain then employ tower contract to resolve.
The total cost is still lower than PISA.
In conclusion, our work can be regarded as more cost-effective.

\paragraph{\textbf{Payout Timeouts.}}
An important advantage of our design is that in the common case (i.e.,
parties are honest and a watchtower is available) the channel can be closed quickly.
To confirm it empirically, we conducted an experiment to measure a time delay
    required to
    close a channel. In our experiment the channel closures and the corresponding
payouts became part of the blockchain just after one or two blocks (i.e.,
between 14-28 seconds on average),  which is significantly faster than in systems
which silently accept closures after a timeout.  For instance, the timeouts in
Raiden refer to the number of blocks that are required to be mined from the
time that \textit{close()} is called until the channel can be settled with a
call to \textit{settle()} and its value is set by default to 50.  With block
arrival times of Ethereum  (14 seconds on average~\cite{blocktime}) the
required time in Raiden is around 12 minutes.  We emphasize, that usually it
is a parameter that parties can adjust; however, in the previous designs it
also constitutes the tolerable unavailability of watching services (thus, too
short timeout may cause the service ineffective).

\subsection{Deployment Considerations}
\label{sec:deployment}
\paragraph{\textbf{Scalability.}}
For simplicity and 
an intuitive description, we only give an example in the context of bidirectional payment 
channels but our protocol can be easily extended to multiple payment channels
and many users. 
To support channels with more than two users, the off-chain exchange (see
\autoref{sec:microexchange}) would be scaled without much more cost incurred
as the signature verification and communication expense are relatively cheap.
Moreover, watchtowers can easily scale to handle a large number of channels
and the confirmation set is easy to update and scale as well without increasing 
overhead drastically.
In practice, watchtowers can be realized without incurring significant overheads
and costs since they can be 
implemented as light blockchain clients. 
To have a concrete perspective on the cost, we measure
the communication and storage overhead for watchtower deployment. 
As depicted in~\autoref{fig:fromparties}, the data layout exchanged from parties to watchtower ($P\rightarrow WT$)
for each off-chain exchange is an $198$-bytes object, where \texttt{cid} indicates the observed 
channel address, $H_s$ encodes the hashed state value, $idx$ specifies the transaction index, 
$\sigma_A$ and $\sigma_B$ are the signatures from both parties.
Meanwhile, the results for the communication cost from watchtower 
to parties ($WT\rightarrow P$) and between parties ($P\Leftrightarrow P$)
are shown in~\autoref{table:development}.
Moreover, we measure the storage cost of storing the latest state value
for $10^{i}$ ($1 \leq i \leq 3$) channels. 
From~\autoref{table:development}, we can see that it takes only 12MB for a watchtower to observe
$10^3$ channels.

\begin{figure}[!ht]
\setlength{\belowcaptionskip}{-0.3cm}
    \begin{center}
        \begin{tikzpicture}[scale = 0.9,>=stealth]
        \draw (0, 0) rectangle node {\scriptsize $\texttt{cid}$} +(0.9, 0.5);  
        \node[scale = 0.7] at (0.45, 0.5+0.25) {20B};
        
        \draw (0.9, 0) rectangle node {\scriptsize $\texttt{$H_s$}$} +(1.2, 0.5);
        \node[scale = 0.7] at (1.5, 0.5+0.25) {32B};
        
        \draw (2.1, 0) rectangle node {\scriptsize $\texttt{idx}$} +(0.6, 0.5);
        \node[scale = 0.7] at (2.4, 0.5+0.25) {16B};
        
        \draw (2.7, 0) rectangle node {\scriptsize $\texttt{$\sigma_A$}$} +(2.4, 0.5);
        \node[scale = 0.7] at (3.9, 0.5+0.25) {65B};

        \draw (5.1, 0) rectangle node {\scriptsize $\texttt{$\sigma_B$}$} +(2.4, 0.5);
        \node[scale = 0.7] at (6.3, 0.5+0.25) {65B};
        
        \draw[dotted] (0,0) -- (0, -0.5);
        \draw[dotted] (7.5,0) -- (7.5, -0.5);
        
        \node[scale = 0.7] at (3.75, -0.25) {198 bytes};
        \draw[->] (3.75-1, -0.25) -- (0, -0.25);
        \draw[->] (3.75+1, -0.25) -- (7.5, -0.25);
        
        \end{tikzpicture}
        \caption{The layout of data from party to watchtower.}
        \label{fig:fromparties}
    \end{center}
\end{figure}
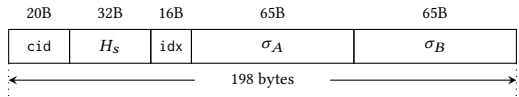

\paragraph{\textbf{Incentives.}}
The \textit{linear rewarding} mechanism (see ~\autoref{sec:linear})
motivates watchtowers to maintain high availability and to shorten payout
timeouts.
A mechanism punishing unavailable or failed watchtowers is also implemented
in our system where customers submit the ineligibility proofs to
the channel contract (see ~\autoref{alg:paymentcontract}) and eventually withdraw their deposit.
There are various options for parties to employ a watchtower according to their performance and commission,
which in turn incentivizes watchtowers to participant our system and provide better services.

\begin{table}[!tb]
  \caption{Watchtower Deployment Cost.}
  \label{table:development}\centering
\begin{tabular}{lrrr}
\toprule
\multirow{2}*{Cost} & \multicolumn{3}{c}{Number of channels}\\ \cmidrule{2-4}
                 & \texttt{$10^1$}  &\texttt{$10^2$}  &\texttt{$10^3$}\\ \midrule
Storage          &0.0011MB           &0.1103MB   &12MB      \\ \bottomrule
\multirow{2}*{Cost} & \multicolumn{3}{c}{Type of communications}\\ \cmidrule{2-4}
& \texttt{$P\rightarrow WT$}  &\texttt{$WT\rightarrow P$}  &\texttt{$P\Leftrightarrow P$}\\ \midrule
Communication   &198B          &195B   &165B \\ 
\bottomrule
\end{tabular}
\end{table}

\section{Security Discussion}
\label{sec:discussion}
%
First, we show that all successfully verifiable messages created within our protocol are authentic.
As we assume that any party cannot obtain a valid secret key
of its counter-party (see \autoref{sec:overview}), a malicious party cannot
generate and sign a payment state on behalf of honest parties. This implies that 
the newer states are agreed by all parties in our protocol. 
All channel closure, and optionally dispute, transactions require states verification
process first (see in \autoref{sec:terminate}) which
ensures that a state submitted by parties is authentic (i.e., signed by all participants). 
The only way to obtain such a state
is to be signed and exchanged by both parties within payment channels.

\medskip
Next, we formalize our system by the following definitions and we show its main
security properties.
\begin{definition}\label{definition:def1}
    By definition, the protocol~$\Pi(\mathcal{P}, \mathcal{W}, \mathcal{A}, \phi, \varphi)$ 
    is a dynamic process with five components involved, where $\mathcal{P}$ denotes
    the compliance parties in the payment channel, 
    $\mathcal{W}$ represents the watchtowers entities, 
    $\mathcal{A}$ is a potential adversary, and 
    $\phi$ and $\varphi$ specify the payment channel contract and the tower
    contract, respectively.
\end{definition} 

\begin{definition}\label{definition:def2}
    Let $\mathcal{C} = \{online,$offline$\}$ be a set of availability states for
    the compliance parties and watchtowers (please note that we consider a
    successful (D)DoS attack as the offline state). Then the function 
    $$
    0/1 \leftarrow \mathcal{F}_{secure}(\mathcal{P}_{\mathcal{C}}, \mathcal{W}_{\mathcal{C}}, \mathcal{A}, state_{i}, \sigma_{state})
    $$ 
    is triggered before the protocol~$\Pi(\mathcal{P}, \mathcal{W}, \mathcal{A}, \phi, \varphi)$ is finalized with $state_{i}$. $\mathcal{P}_{\mathcal{C}}$ and 
    $\mathcal{W}_{\mathcal{C}}$ reveal the availability state of the
    honest parties and watchtowers, 
    $state_{i}$ denotes an off-chain payment transaction submitted by any party
    while $\sigma_{state}$ is the corresponding signature. The output $1$ means that the
    $state_{i}$ is the latest off-chain payment state and the final payout
    should be performed,  
    while $0$ indicates that there are suspicious misbehavior (e.g.,
    an illegitimate dispute) and additional
    actions should be performed to prevent the system from going into an
    insecure state.
\end{definition}

\begin{theorem}\label{theorem:the1}
    The channel is allowed to close with a correct state submitted by honest $\mathcal{P}$.
\end{theorem}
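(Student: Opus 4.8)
The plan is to trace an honest execution of the closure protocol step by step and then do a case analysis on the availability of the watchtower, showing that in every case the function $\mathcal{F}_{secure}$ of \autoref{definition:def2} returns $1$ on the submitted state and the channel contract eventually reaches the $\bot$ flag with $s = s_i$. Throughout I would lean on the authenticity argument already given before the definitions (a correctly exchanged state carries valid signatures of both parties) and on the unforgeability assumption of \autoref{sec:overview}.

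First I would handle the on-chain bookkeeping. When honest $\mathcal{P}$ invokes \texttt{close($s_i, r, \mathcal{P}_{\sigma_k}$)} on the channel contract (\autoref{alg:paymentcontract}) with the latest state $s_i$, the assertion $Flag = OK \wedge verify(pk_k,(s_i,r),\mathcal{P}_{\sigma_k})$ succeeds, so the contract sets $Flag \leftarrow DISPUTE$, $s \leftarrow s_i$, $ddl \leftarrow now+t$, $end \leftarrow ddl+T$, and forwards $s_i$ to the tower contract via \texttt{watchTower.close(this,$s_i$)}, which records $(cid, s_i)$ in \texttt{Channels} (\autoref{alg:wtcontract}). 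Next I would argue the recorded state cannot be displaced: the only way to overwrite $s$ is a successful \texttt{dispute($s_j,\cdot,\cdot$)}, which asserts $s_j.idx > s.idx = s_i.idx$; since $s_i$ is the latest state and no party can produce a jointly-signed state with a strictly larger index, no such $s_j$ exists, so $s$ stays equal to $s_i$ for the whole dispute window.

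Then comes the case split on $\mathcal{W}_{\mathcal{C}}$. (i) If the watchtower is \emph{online}: because an honest party only treats a transaction as finalized after obtaining the receipt $\sigma_{WT}$ (\autoref{sec:microexchange}), the watchtower's local ``latest state'' coincides with $s_i$; hence in the next update period it sets the confirmation bit $b_j = 1$, calls \texttt{update()} and thus \texttt{respond()}, which invokes \texttt{payout($s_i$, True)}. The $msg.sender = watchTower$ branch is taken, the balance-conservation and $s_i = s$ assertions hold, the transfers execute and $Flag \leftarrow \bot$ — in the common case before $t$. (ii) If the watchtower is \emph{offline}: no \texttt{payout} call from it arrives, but after $end = ddl+T$ any party (in particular $\mathcal{P}$) calls \texttt{payout($s_i,\cdot$)}; the $msg.sender = any~party$ branch applies since $now > end$ and $s_i = s$, the transfers execute and $Flag \leftarrow \bot$. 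In both cases the channel closes with the correct state, i.e. $\mathcal{F}_{secure}$ returns $1$.

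I expect the main obstacle to be case (i): rigorously establishing that the watchtower's recorded latest state is exactly the state an honest party submits. This needs an auxiliary invariant from the regular-payment protocol of \autoref{sec:microexchange} — an honest party regards state $s_i$ as latest only once it holds $\sigma_{WT}$ for $s_i$ (so the watchtower has it), and a possibly malicious counter-party cannot induce the watchtower to record a higher-index state (unforgeability again). I would isolate this as a small lemma and use it to discharge the confirmation-bit step; the remaining steps are routine contract state-transition checks.
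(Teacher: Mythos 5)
Your argument is correct at the level of the mechanism, but it takes a noticeably different and more concrete route than the paper. The paper's proof of this observation is a two-sentence argument entirely at the level of the abstraction in \autoref{definition:def2}: honest $\mathcal{P}$ sends the closure transaction to $\phi$, $\mathcal{F}_{secure}(\mathcal{P}_{online}, \mathcal{W}_{online}, \mathcal{A}, state_i, \sigma_{state_i})$ returns $1$, and $\mathcal{W}$ updates \texttt{Confs} via $\varphi$ and confirms the payout --- i.e., it only treats the common case with an online watchtower and leaves the watchtower-unavailable fallback to \autoref{theorem:the3}. You instead trace the contract code of \autoref{alg:paymentcontract} and \autoref{alg:wtcontract} step by step, add a non-displacement argument (no jointly signed state with a larger index exists, so \texttt{dispute()} cannot overwrite $s$), isolate the auxiliary invariant that the watchtower's recorded latest state equals the honestly submitted $s_i$ (via the $\sigma_{WT}$ receipt of \autoref{sec:microexchange}), and cover the offline-watchtower case through the any-party \texttt{payout()} branch after $T$. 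This buys a more rigorous, implementation-faithful justification and subsumes part of \autoref{theorem:the3}, at the cost of more machinery than the paper uses for this particular observation. One small mismatch with the paper's formalism: you claim $\mathcal{F}_{secure}$ returns $1$ in both cases, but per \autoref{definition:def2} and the paper's proof of \autoref{theorem:the3}, an offline watchtower yields output $0$ and the long timeout $T$ is the fail-safe path; the channel still closes with the correct state, so your conclusion stands, but you should not identify that outcome with $\mathcal{F}_{secure}$ outputting $1$.
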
 

\begin{proof}
    When the parties registered in the channel agree to close the channel with $state_i$,
    $\mathcal{P}$ sends a closure transaction to $\phi$ and then the function 
    $\mathcal{F}_{secure}(\mathcal{P}_{online}, \mathcal{W}_{online},
    \mathcal{A}, state_{i}, \sigma_{state_{i}})$ is executed and returns $1$. 
    Afterwards, $\mathcal{W}$ checks the output of $\mathcal{F}_{secure}$, 
    updates \texttt{Confs} with $\varphi$ and confirms the final payout with $\phi$.
\end{proof}

\begin{theorem}\label{theorem:the2}
    If $\mathcal{A}$ plays a strategy that finalizes the payment channel with a stale state submitted
    when the counter-party $\mathcal{P}$ is in the offline state (e.g.,
    temporarily away or (D)DoSed),
    then $\mathcal{A}$ cannot gain any unearned income from the channel. 
\end{theorem}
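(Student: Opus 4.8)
The plan is to reason by cases on the availability of the watchtower $\mathcal{W}$ at the moment $\mathcal{A}$ submits its adversarial closure, and to show that in each case the honest counter-party $\mathcal{P}$ can re-establish the genuine latest state, so that the channel is finalised with that state and $\mathcal{A}$ is paid at most the balance it is legitimately owed. First I would fix notation: let $state_i$ be the latest state jointly signed by $\mathcal{P}$ and $\mathcal{A}$ (hence the one with the largest $idx$ recorded by $\mathcal{W}$, and the one fixing the true final balances by the authenticity argument preceding \autoref{definition:def1}), and suppose $\mathcal{A}$ calls \texttt{close} on $\phi$ with a stale state $state_j$, $idx_j < idx_i$, while $\mathcal{P}$ is offline. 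By \autoref{alg:paymentcontract}, \texttt{close} passes the signature check and sets the flag to \texttt{DISPUTE}, $ddl \leftarrow now + t$, $end \leftarrow ddl + T$, $isRspd \leftarrow \texttt{False}$, then forwards $state_j$ to $\varphi$; the outcome is governed by the evaluation of $\mathcal{F}_{secure}(\mathcal{P}_{offline}, \mathcal{W}_{\mathcal{C}}, \mathcal{A}, state_j, \sigma_{state_j})$ (see \autoref{definition:def2}).

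Next I would split on $\mathcal{W}$. If $\mathcal{W}$ is online: since $\mathcal{W}$ holds $state_i$ as the recorded latest state and $idx_j < idx_i$, it classifies $state_j$ as stale, sets the channel's bit of \texttt{Confs} to $0$, and on its next \texttt{update} the tower contract's \texttt{respond} invokes \texttt{payout}$(state_j, \texttt{False})$; by \autoref{alg:paymentcontract} this keeps the flag at \texttt{DISPUTE} and resets $end \leftarrow now + T$, so $\mathcal{F}_{secure}$ returns $0$ and the long fail-safe window is (re)armed. If $\mathcal{W}$ is offline (including under a successful (D)DoS): $\varphi$ receives no \texttt{payout} call from $\mathcal{W}$, $isRspd$ stays \texttt{False}, and by inspection of \texttt{payout} no balance is transferred before $end = ddl + T$ --- the default behaviour is rejective, i.e.\ this is exactly the unavailable-watchtower branch on which $\mathcal{F}_{secure}$ returns $0$. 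In both cases the channel is not finalised before time $(\text{closure time}) + T$.

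It then remains to invoke the standing liveness assumption from \autoref{sec:terminate} that honest parties connect to the blockchain at least once every $T$. Within that window $\mathcal{P}$ observes the emitted $Closure(cid, state_j, r)$ event, sees that $\mathcal{A}$ tried to close with a stale state, and before $end$ invokes \texttt{dispute} with $state_i$, $r$ and the valid joint signature $\sigma_{state_i}$. The guards of \texttt{dispute} hold ($state_i$ carries both parties' signatures, $idx_i > idx_j = s.idx$, the flag is \texttt{DISPUTE}, and $now < end$), so $s$ is overwritten by $state_i$ and $state_i$ is forwarded to $\varphi$. From this point the only state that can ever be paid out is $state_i$: if $\mathcal{W}$ is (again) online it confirms it via \texttt{payout}$(state_i, \texttt{True})$; otherwise, once $now > end$, any party --- in particular $\mathcal{P}$ --- triggers \texttt{payout}$(state_i, \cdot)$ through the party branch, whose assertion $s_i = s$ forces the paid-out state to be $state_i$. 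In either case $\phi$ transfers $bal_A, bal_B$ according to $state_i$, so $\mathcal{A}$ receives exactly its legitimate balance: the excess it would have obtained from $state_j$ over $state_i$ is nullified, hence no unearned income. As a by-product, whenever $\mathcal{W}$ failed to reply by $end$ the $isRspd = \texttt{False}$ clause of \texttt{challenge} also lets $\mathcal{P}$ reclaim $\mathcal{W}$'s deposit, but that is orthogonal to the present claim.

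The step I expect to be the main obstacle is the fail-safe case: one must verify from \autoref{alg:paymentcontract} that \emph{no} execution path releases funds before $end$ while $\mathcal{W}$ is silent --- in particular that the party branch of \texttt{payout} requires $now > end$, and that the $perc$ and $isRspd$ bookkeeping cannot be manipulated by $\mathcal{A}$ to shorten $T$ --- and one must state precisely the liveness hypothesis on $\mathcal{P}$ (that ``offline'' lasts strictly less than the time remaining until $end$, and that $\mathcal{P}$ can get its dispute transaction confirmed within that time), under which the statement is even meaningful, since a permanently-absent party obviously cannot be protected. The remaining work is a routine check of the \texttt{assert} guards in the channel and tower contracts.
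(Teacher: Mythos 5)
Your proposal is correct and follows essentially the same route as the paper's own proof: the watchtower classifies the submitted state as stale and withholds confirmation, the long fail-safe timeout $T$ is armed, the standing assumption from \autoref{sec:terminate} that parties reconnect at least once per $T$ lets $\mathcal{P}$ observe the closure and file a \texttt{dispute} with the latest jointly signed state, and the channel then finalizes with that state, so $\mathcal{A}$ gains nothing. The only difference is that you additionally case-split on the watchtower being offline and trace the contract guards in detail; the paper keeps this observation's proof to the $\mathcal{W}_{online}$ case and defers the unavailable-watchtower scenario to \autoref{theorem:the3}, so your extra case is sound but duplicates that separate result rather than changing the argument.
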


\begin{proof}
    We assume that $\mathcal{A}$ invokes a \textit{closure transaction} by
    submitting the stale
    state $state_i$. Before finalizing the channel, the function $\mathcal{F}_{secure}(\mathcal{P}_{offline}, \mathcal{W}_{online}, \mathcal{A}, state_{i}, \sigma_{state_{i}})$
    outputs $0$ since an eligible watchtower rejects the state which is not the latest one.
    Then the long timeout \textit{T} is initialized. 
    In \autoref{sec:terminate}, we assume that parties can connect to the
    blockchain at least once per \textit{T}, thus when $\mathcal{P}$ become
    available it would notice the misbehavior performed by $\mathcal{A}$.
    Then $\mathcal{P}$ sends the dispute transaction into $\phi$ with the latest
    signed off-chain payment state. Eventually, $\mathcal{W}$ resolves the
    dispute and the payment channel finalizes with a correct state.
\end{proof}

\begin{theorem}\label{theorem:the3}
    If $\mathcal{W}$ is unavailable (i.e., offline for a while or under a (D)DoS
    attack  launched by $\mathcal{A}$), the protocol~$\Pi(\mathcal{P},
    \mathcal{W}, \mathcal{A}, \phi, \varphi)$ goes into a fail-safe state and
    the channel is still able to be finalized
    with a correct state.
\end{theorem}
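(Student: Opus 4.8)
The plan is to establish the two clauses of the statement separately — that the system enters the fail-safe state, and that the channel nonetheless terminates with a correct state — by tracing the control flow of \autoref{alg:paymentcontract} in the run where $\mathcal{W}$ never sends a message. First I would invoke \autoref{definition:def2}: an offline (or (D)DoSed) $\mathcal{W}$ cannot produce a confirmation, so $\mathcal{F}_{secure}(\mathcal{P}_{\mathcal{C}}, \mathcal{W}_{offline}, \mathcal{A}, state_i, \sigma_{state})$ falls back to its default output $0$. This is precisely the fail-safe state: nothing is paid out merely because $\mathcal{W}$ is silent, since the branch of \texttt{payout()} guarded by \texttt{msg.sender = watchTower} can only be reached by $\mathcal{W}$ itself (unforgeable sender, by the assumptions of \autoref{sec:overview}), and a silent $\mathcal{W}$ simply leaves $Flag = \texttt{DISPUTE}$, $isRspd = \texttt{False}$, and the long deadline $end = ddl + T$ in place.

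Next I would split on which state was used to trigger closure. In the honest case, the party has called \texttt{close()} with the latest signed state, so the contract stores $s \leftarrow state_i$. In the adversarial case, $\mathcal{A}$ has submitted a stale $state_j$; here I would use the assumption recalled in \autoref{sec:terminate} that an honest party reconnects at least once per $T$. Since \texttt{close()} sets $end = now + t + T$, the interval between closure and $end$ has length $t + T \ge T$, so $\mathcal{P}$ observes the \texttt{Closure} event and, before $end$, calls \texttt{dispute()} with its newer signed state; this passes the $s_i.idx > s.idx$ and $now < end$ checks and overwrites $s$ with the correct state. In either case, once $now > end$ the branch of \texttt{payout()} guarded by \texttt{msg.sender = any party} becomes enabled, and it transfers the balances of the stored state $s$ (after the conservation check $bal_A + bal_B \ge s_i.bal_A + s_i.bal_B$ and the equality $s_i = s$) and sets $Flag \leftarrow \bot$. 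Hence the channel is finalized with a correct state without $\mathcal{W}$ ever acting, which is the second clause.

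I would then add the accountability observation that rounds out the ``fail-safe'' picture: because $\mathcal{W}$ never invoked \texttt{payout()}, the contract still has $isRspd = \texttt{False}$ after $end$, so the honest customer can call \texttt{challenge()} with a watchtower signature and trigger \texttt{watchTower.withdraw($\cdot$, 1)}, recovering the full deposit; if $\mathcal{W}$ were merely slow rather than fully offline, the same step instead yields the linear fraction $perc = (now - ddl)/T$.

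\textbf{Main obstacle.} The one genuinely delicate step is the timing argument in the adversarial case: I must be sure the honest party's ``once per $T$'' reconnection guarantee really lands inside the dispute window, and that $\mathcal{A}$ cannot defeat this by racing a second stale \texttt{close()} or by calling the final \texttt{payout()} itself after $end$. The window has length $t + T \ge T$, which settles the first point; and the $s_i = s$ equality test inside every \texttt{payout()} branch ties the payout to whichever state survived the dispute phase, so $\mathcal{A}$ can at best accelerate payout of the \emph{correct} state and gains nothing — at which point the remaining reasoning coincides with the proof of \autoref{theorem:the2}.
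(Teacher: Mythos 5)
Your proposal is correct and follows essentially the same route as the paper's own proof: $\mathcal{F}_{secure}$ outputs $0$ when $\mathcal{W}$ is unavailable, the long timeout $T$ is triggered, the honest party (who reconnects at least once per $T$) submits a dispute within the window, and after the timeout the payout is executed by any party with the correct stored state. Your version is simply more explicit in tracing the branches of \autoref{alg:paymentcontract} (the $end = now + t + T$ window, the $s_i = s$ check in the any-party branch) and adds the \texttt{challenge()}/\texttt{withdraw} accountability remark, which the paper handles elsewhere rather than in this proof.
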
 

\begin{proof}
    The \autoref{theorem:the3} has two implications. 
    First, when $\mathcal{W}$ is offline for a while and becomes
    inaccessible,
    the function $$0 \leftarrow \mathcal{F}_{secure}(\mathcal{P}_{online}, \mathcal{W}_{offline}, \mathcal{A}, state_{i}, \sigma_{state_{i}})$$ as it cannot interact with $\varphi$ on time.
    Then a long timeout $T$ is triggered.
    In this period, the channel participant $\mathcal{P}$ 
    submits a dispute transactions into the blockchain. 
    Afterwards, $\mathcal{W}$ is responsible for
    resolving the dispute and closing the channel. 
    Second, a similar situation happens even when $\mathcal{W}$ undergoes a (D)DoS attack and becomes unavailable, in the period 
    of $T$, $\mathcal{P}$ performs dispute and $\phi$ records the latest 
    off-chain state. After $T$, the final payout would be proceeded
    by any party with the correct state.
\end{proof}

\paragraph{\textbf{MITM Attack or Substitution Attack.}}
To launch a MITM attack, $\mathcal{A}$ can eavesdrop and intercept transactions 
from $\mathcal{W}$ to $\varphi$ and $\mathcal{P}$ to $\phi$.
For instance, he extracts the \textit{Confs} and $s_i$ respectively, then constructs
new transactions with the intercepted data.
However, as shown at the beginning of this section, the malicious transactions cannot pass the verification procedures on $\phi$ and
$\varphi$'s side since $\mathcal{A}$ cannot compromise 
the underlying cryptographic primitives or
pretend to be the other parties.
Any tiny change of the context (i.e., $s_i$) will be 
caught by the signature verification process. 
Moreover, other eavesdropping or MITM attacks can be easily mitigated by deploying
an end-to-end authenticated and encrypted communication channel 
between the protocol parties. 

\paragraph{\textbf{Failure safety.}}
The concept of \textit{fail-safe watchtowers} ensures that honest 
parties are able to close channels with a quick timeout in most generic
cases. 
However, when an adversary is intentionally closing a channel with a
stale state, a longer but safe-assurance timeout is
introduced.
This long timeout avoids wrongdoing or unavailability of watchtowers. 
In particular, any unavailable or compromised watchtower should not accept channel closures
in silence. 
In contrast to competing systems, this design introduces longer timeouts only
for the worst-case (i.e., a misbehaving peer), while in the common case (i.e.,
honest peers and an available watchtower) channels are closed quickly and safe.

\paragraph{\textbf{State privacy.}}
Our protocol provides the so-called {\it state privacy} as proposed in~\cite{mccorry2018pisa}, 
which is an extension of the notion of {\it value privacy}~\cite{malavolta2017concurrency}.
Unless the details of the state are revealed by the channel participants in order to 
finalize the channel, a watchtower only learns the hash value $\texttt{H}(s_i
\parallel r)$.
Therefore, the metadata of all intermediate states are invisible to the watchtower,
thus the privacy of the off-chain transactions is protected. 
Note
that the hash value is computed with a large random number $r$, which prevents
a watchtower from finding the pre-image by exhausting $s_i$, whose entropy can be
fairly low.

\paragraph{\textbf{Availability.}}
Requiring watchtowers to keep off-chain verified states and updating
confirmation sets  
introduces a single point of failure, as with a failed (or (D)DoSed) watchtower it would not be
possible to respond to a channel closure request.
Fortunately, as discussed, in this case a longer timeout will be triggered.
Moreover, a watchtower instance can be easily replicated by replicas running a
traditional distributed consensus protocol to share current channel states.

\section{Short-lived Assertions}
\label{sec:assertions}
\subsection{Motivation}
\label{sec:motivation}
\begin{figure*}[t!]
    \centering
    \includegraphics[width=\linewidth]{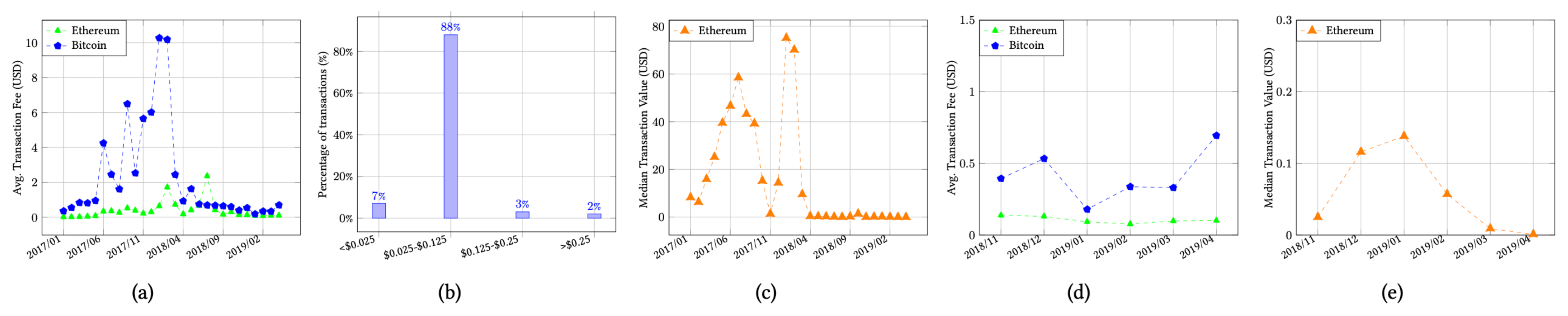}
    \caption{(a) Historical transaction fees of Ethereum and Bitcoin, (b) Bitcoin transaction fees 
versus frequency for 1 million transactions, (c) Historical transaction value of Ethereum,
(d) Average transaction fees of Ethereum and Bitcoin in recent 6 months, (e) 
Median transaction value of Ethereum in recent 6 months.}
    \label{fig:motivation}
\end{figure*}
Although watchtowers increase the security of payment channels, in small-amount
(micro)payment scenarios their deployment may be economically inefficient.
Traditional bank-based transactions usually incur between 21 to 25 cents (in the US) 
fees~\cite{pass_probchannel} plus a percentage of the transaction amount.
To better understand fee structures in blockchain platform, we investigated them
for Ethereum and Bitcoin.
For example, 
a plain transaction in Ethereum (without executing smart contracts) currently costs around
\$0.011~\footnote{The estimation is from \url{https://ethgasstation.info/}} and
similarly
in Bitcoin transaction fees are usually at 
least 0.0001 BTC~\cite{pass_probchannel}, corresponding to between 2.5 and 10 cents over the last 
five years. 
As depicted in \autoref{fig:motivation}(a) and \autoref{fig:motivation}(d) the average transaction fees of Bitcoin decreased
to around \$0.5 in recent six months while staying around \$0.1 in Ethereum. 
Meanwhile, \autoref{fig:motivation}(b) shows a vivid insight that most transactions fees in
Bitcoin stand between \$0.025 and \$0.125. 

The concept of watchtowers may be economically viable for most generic payment cases which 
can achieve quick channel termination.
However, customers in such scenarios like online gaming, 
video streaming, or energy markets~\cite{probability}, should perform transactions 
almost continuously with tiny amounts.
In \autoref{fig:motivation}(c) and \autoref{fig:motivation}(e), we show that for
the Ethereum platform, the
median transaction value transferred 
in recent half-year is less than \$0.15. Therefore, small payments are still
dominating the majority of all transactions
in Ethereum.

Although, it is hard to estimate the transaction distribution in payment
channels (as this data is not public) we can assure a lower bound for a
watchtower profitability.
We set $\Gamma$ to denote the cost of renting a watchtower and $\Sigma tx$
represents the total
expense of a blockchain transaction in a channel. The value of $\Sigma tx$
includes the value transferred and transaction fees incurred.
When parties perform tiny-amount transactions, i.e., when
$$\Sigma tx < \Gamma$$
it means that the cost of the watchtower involved is higher than the cost the contract
termination by itself.
Therefore, employing a watchtower in such a case is economically ineffective. 

To mitigate this limitation, we introduce short-lived assertions that 
allow channel contracts to distinguish out-of-date and relatively fresh payment
states without relying on any other third party.

\subsection{Mechanism}
\label{sec:shortlived}
A generic limitation of existing payment channels is that smart contracts
operating them cannot distinguish whether a given statement is the latest one,
as an on-chain smart contract does not know an off-chain state without being
updated (payment channels exactly want to limit on-chain updates).  
As a consequence, a stale signed assertion can be sent to a contract which without
knowing about its freshness would temporarily accept it as the current one. 
To mitigate this issue, we introduce \textit{short-lived assertions} that allow
a channel contract to distinguish a fresh state from old ones, and possibly
close the channel with a fresh state quicker.

We assume that 
$[PAY_{k-1},PAY_{k+1}]$
are off-chain transactions between Alice and Bob conducted just after arrival
of blocks 
$block_{i-1}$ and $block_{i+1}$ respectively as depicted in \autoref{fig:assertionexample}.
In our construction, each party appends a \textit{block hash value} when
exchanging a newer micropayment state, such that a state is extended by a
block hash value which acts as a freshness information (i.e., it implies that the transaction is
newer than the given block). 
For instance, $PAY_{k-1}$ would be generated as:
$$
\sigma_{PAY_{k-1}} = \texttt{Sign}_{sk_{A, B}}({\texttt{$bal_{A}$}}  \parallel {\texttt{$bal_{B}$}} \parallel {\texttt{idx}} \parallel {\texttt{$H(block_{i-1})$}}),
$$ 
where $H(block_{i-1})$ represents the hash value of the current block. 
 
As our main goal is to eliminate any third party, like a watchtower,
channel contracts should be able to distinguish by themselves a possibly stale
state from relatively fresh states. 
Interestingly, that can be simply realized, as smart contract languages allow
contracts to access recent block hashes, therefore, a channel contract can
verify that the given state is fresh by verifying it against $n$ last recent
block hashes (where $n$ can be a contract-specific parameter). 
If such a state is submitted it implies that it was created recently, thus the
channel contract can be closed faster as a) it is likely that the state is quite
accurate, thus even if one party successfully misbehaves, its benefit will be
relatively small or marginal, b) it is likely that the both parties are still online, thus
any party can mitigate potential misbehavior.
If the channel contract receives a state which is stale (i.e., older than $n$
last blocks), then it triggers a dispute with a long timeout $T$.

Short-lived assertion is a complementary 
solution that can be applicable for circumstances in which watchtowers are economically unjustified. 
We believe that it may be especially interesting in applications like
micro- or nanopayments.
Another interesting property of this solution is that it ensures both parties
that they see the same version of the blockchain (they sign the same recent
block hash with every transaction), thus it can additionally protect them from
some blockchain-fork related attacks.

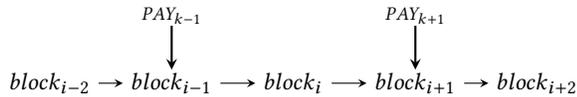
\begin{figure}[h!]
\vspace{-0.3cm} 
\setlength{\abovecaptionskip}{0.1cm}
\setlength{\belowcaptionskip}{-0.3cm}
    \begin{center}
        \begin{tikzpicture}[scale = 0.9,>=stealth]
        \node (A) at (0, 0) {$block_{i-2}$};
        \node (B) at (1.8, 0) {$block_{i-1}$};
        \node (C) at (3.6, 0) {$block_{i}$};
        \node (D) at (5.4, 0) {$block_{i+1}$};
        \node (E) at (7.2, 0) {$block_{i+2}$};


        \draw[->, thick] (1.8, 0.85) -- (1.8, 0.2);
        \node[scale = 0.8] at (1.8, 1.0) {$PAY_{k-1}$};


        \draw[->, thick] (5.4, 0.85) -- (5.4, 0.2);
        \node[scale = 0.8] at (5.4, 1.0) {$PAY_{k+1}$};


        \draw[->]
          (A) edge (B) (B) edge (C) (C) edge (D) (D) edge (E);

        \end{tikzpicture}   
        \caption{Short-lived Assertions Example}
        \label{fig:assertionexample}
    \end{center}
\end{figure}

\subsection{Evaluation}
\label{sec:shortlivedevaluate}
In the design of short-lived assertions, each off-chain
transactions is associated with the current block hash value.
A channel contract customizes a \textit{freshness limit} $n$ to determine whether
the statement is relatively fresh or not via simply checking how old is the
block hash from the statement.
The cost incurred depends on the size of the freshness limit and the statement
submitted. 
The larger the limit is set,
the more gas would be spent as more checks need to proceed in the worst
case.
We measure the total cost of short-lived assertions verification (including 
state parsing, state verification, and block hash value
freshness verification)
for the limit values set to  
$2 \times i$ ($1 \leq i \leq 3$) blocks,
and describe the data in \autoref{table:shortlivedcost}.
Please note that the cost results come from the worst case execution 
for every different freshness limit, which means that the channel contract
should verify the freshness check of the block hash value for all blocks within
limit.
As shown in \autoref{table:shortlivedcost}, the total cost for short-lived assertions
with the limit set to two blocks is \$0.05. With the freshness limit
increasing, more block hash evaluations are processed thus the cost
increases as well.

\begin{table}[!t]
\centering
\caption{Cost for short-lived assertions verification.}
\label{table:shortlivedcost}
\begin{tabular}{L{1.5cm}R{1.1cm}R{1.1cm}R{1.1cm}}
\toprule
\multirow{2}*{Cost} & \multicolumn{3}{c}{Relative freshness limit (blocks)}\\ \cmidrule{2-4}
          & \texttt{2}  &\texttt{4} &\texttt{6} \\ \midrule
Total (gas)     &110178   &163365    &203267    \\
USD             &0.055    &0.082    &0.102          \\ \bottomrule
\end{tabular}
\end{table}

\section{Conclusions}
\label{sec:conclusions}
In this paper, we propose novel fail-safe watchtowers that do not constantly watch
on-chain payment channel contracts but watch off-chain transactions and only
send a single on-chain
message periodically. 
We show that our watchtowers are fail-safe
and efficient, easily handling thousands of payment channels. 
Our design minimizes timeouts required in the common case, as a
watchtower can immediately confirm a channel closure without waiting for a
long timeout.

Moreover, we show that watchtowers in general may be economically ineffective in terms of micropayments
and multiple payment scenarios. 
We introduce short-lived assertions that allow channel contracts to distinguish
fresh and stale assertions and process channel closures accordingly.
Combinations and extensions of these frameworks can be an interesting future
research topic. In particular, we would like to investigate fail-safe
watchtowers publishing current channel states with a randomized probabilistic set
implementations (like Bloom filters) instead of a bitmap like presented.

\section*{Acknowledgments}
We thank the anonymous reviewers for their valuable comments and suggestions.
This research is supported by the Ministry of Education, Singapore, under its
MOE AcRF Tier 2 grant (MOE2018-T2-1-111) and by the SUTD SRG ISTD 2017 128
grant. 

\bibliographystyle{ACM-Reference-Format}
\balance 
\bibliography{paper}

\end{document}